\newtheorem{theorem}{Theorem}
\title{Covert Multicast in UAV-Enabled Wireless Communication Systems With One-hop and Two-hop Strategies}
\author{Wenhao Zhang\textsuperscript{1}, Ji He\textsuperscript{2,*}, Yuanyu Zhang\textsuperscript{2}, \\
\IEEEauthorblockA{1. School of Systems Information Science, Future University Hakodate, Hakodate, Japan}

\IEEEauthorblockA{2. School of Computer Science and Technology, Xidian University, Xi'an, China}

\IEEEauthorblockA{* Corresponding author: garyhej1991@gmail.com}
}
\begin{document}

\maketitle

\begin{abstract}
This paper delves into the time-efficient covert multicast in a wireless communication system facilitated by Unmanned Aerial Vehicle (UAV), in which the UAV aims to disseminate a common covert information to multiple ground users (GUs) while suffering from the risk of detection by a ground warden (Willie). We
propose one hop (OH) and two hop (TH) transmission schemes,
first develop a theoretical framework for performance modeling of both the detection error probability at Willie and the transmission time at UAV. The optimization problems subject to the covertness constraint for the two transmission schemes are then formulated to gain insights into the system settings of the UAV's prior transmit probability, transmit power and horizontal location that affect the minimum transmission time. The optimization problems are non-convex and challenging to give numerical results. We thus explore the optimal setting of the transmit power and the prior transmit probability for the UAV separately under specific parameters with two schemes. We further propose a particle swarm optimization (PSO) based algorithm and an exhaustive algorithm to provide the joint solutions for the optimization problem with the OH transmission scheme and TH scheme, respectively. Finally, the efficiency of the proposed PSO-based algorithm is substantiated through extensive numerical results.

\end{abstract}

\begin{IEEEkeywords}
UAV, covert wireless communication, multicast, time-efficient, PSO.
\end{IEEEkeywords}

\section{Introduction}
Unmanned aerial vehicle (UAV) assisted communication has the advantages of high mobility, swift deployment, and enhanced channel conditions, making it rapidly grow and widely employed in various wireless communication systems, including civilian and military. With ongoing cost reductions and endurance improvements of UAVs, UAV-enabled communications are promising to play an increasingly important role in future wireless systems. Thus attracting significant attention from both industry and academia \cite{zeng2019accessing,wu2018cooperative}. However, the broadcast nature of wireless channels makes UAV-enabled communication susceptible to malicious eavesdroppers or wardens \cite{cui2018robust}. Covert communication has recently been regarded as an excellent solution against these attacks. Unlike conventional security approaches that focus solely on protecting message content, covert communication aims to hide the communication process itself, providing a higher level of security \cite{bash2013limits}.

Currently, extensive contributions have been devoted to the study of UAV-enabled covert wireless communication systems. For a UAV system where the UAV acts as the transmitter, the authors in \cite{zhou2019joint} propose an iterative algorithm to efficiently optimize the UAV's trajectory and transmit power concurrently so as to achieve the maximal covert communication rate in the system. In  \cite{rao2022optimal}, a geometric method algorithm is proposed to jointly determine the optimal UAV trajectory and transmit power for the covert communication capacity maximization in a UAV system. The authors in \cite{jiang2021covert} present an iterative algorithm to collectively optimize the time slot, transmit power and trajectory of UAV for the covertness performance enhancement in a UAV system. The work in \cite{yan2021optimal} and \cite{zhou2021three} delves into the optimization of UAV transmit power and placement for covert rate maximization in two-dimensional and three-dimensional UAV systems. The work in \cite{chen2021uav} focus on scenarios where the UAV acts as a relay, optimizing block length and transmit power at both the transmitter and UAV to enhance the covert performance. Regarding the full-duplex UAV relay system, the authors in \cite{zhang2022uav} formulate an optimization problem that jointly optimizes the transmit power of the transmitter and jamming power of the UAV relay. They develop a penalty successive convex approximation scheme to tackle this optimization problem and achieve the maximum covert throughput. Moreover, considering the delay-intolerant UAV relay system, the authors in \cite{jiao2022placement} explore UAV location optimization to maximize the effective throughput under the constraints of covertness and block length.

In light of the considerable advantage of intelligent reflection surface (IRS) technology in reconfiguring the propagation environment, the authors in \cite{chen2023uav} propose a novel scheme to covert communication by integrating UAVs and IRS. In this scheme, the UAV carries an IRS to introduce additional randomness to channels and the optimal transmit-to-jamming ratio is obtained to maximize the achievable covert transmission rate. The authors in \cite{ wang2022covert} iteratively optimize the transmit power of the transmitter, the phase shift of the IRS, and the horizontal location of the UAV-IRS to maximize the covert transmission rate. The authors in \cite{qian2023joint} develop an iterative algorithm aiming to optimize the UAV 2D trajectory and IRS phase shift to maximize the average covert rate in a UAV-enabled wireless communication system. Furthermore, in \cite{bi2023deep}, a deep reinforcement learning-based optimization algorithm is proposed for optimizing the UAV 3D trajectory and IRS phase shift to enhance covert communication performance. Concerning multi-user systems, the authors in \cite{tatar2022aerial} devise an energy-efficient multi-UAV-mounted IRS covert communication scheme over the Terahertz band. They propose a block successive convex approximation approach to iteratively design the user scheduling, power allocation, maximum jamming power, IRS beamforming, and UAVs' trajectory to improve covert throughput. Besides, the non-orthogonal multiple access (NOMA), which can facilitate multi-users to access services simultaneously over the same frequency, offers enhanced throughput to multi-user systems. Consider the UAV-assisted NOMA networks, the authors in \cite{su2023optimal} jointly optimize the hovering height and power to enhance the covert performance. In addition, the authors in \cite{deng2023joint} propose an iterative block coordinate descent-based successive convex approximation method to maximize the average covert achievable rate for the considered UAV-assisted NOMA networks.

Observing that multicast communication stands as a crucial communication paradigm for enhancing transmission efficiency and overall network utility \cite{sidiropoulos2006transmit}, especially in effectively utilizing network radio resources to transmit common data to multiple users simultaneously. The multicast communication holds promise for supporting the future content-centric applications \cite{lin2021supporting}, and it has significant potential to enhance the efficiency of covert communication systems. Thus, how to implement the efficient covert multicast communication becomes an issue of considerable importance.

Notice that existing covert communication primarily focuses on unicast transmission between transceivers. Covert multicast communication remains an unexplored issue. To fill up this gap, this paper considers a UAV-enabled multicast communication system, where the UAV acts as a flying base station, disseminating common information to multiple ground users (GUs) while avoiding detection by a ground warden. Recognize that multicast performance is fundamentally limited by the worst communication link of the transmitter to the receiver. The UAV-enabled multicast system mitigates this limitation by adaptive location design of the UAV to balance the overall communication link, thus enhancing transmission performance. Under the above considerations, this paper aims to seek the optimal transmit power and hover location (i.e., horizontal location) of the UAV to minimize the overall covert transmission time. The main contributions of this paper are summarized as follows.

\begin{itemize}
\item We first develop a theoretical framework encompassing the performance modeling of both the detection error probability at Willie and the overall transmission time of the UAV. We also formulate the optimization problems to determine the optimal settings for the transmit power and hover location of the UAV, aiming to minimize the transmission time while satisfying the covertness and altitude constraint of the UAV with one hop (OH) and two hop (TH) transmission schemes.

\item We then explore the mathematical results of the optimal transmit power, optimal prior transmit probability and corresponding minimum transmission time with the given hover location of the UAV. We further propose a particle swarm optimization (PSO)-based optimization algorithm and an exhaustive algorithm that jointly optimizes the UAV’s transmit power, prior transmit probability and horizontal location to achieve the overall minimum transmission time for the OH transmission scheme and TH scheme, respectively.

\item Finally, we provide extensive numerical results under various GUs distribution scenarios to illustrate the comparison between the optimal transmit power setting with fixed-location of the UAV and the proposed optimization algorithms, and thus to demonstrate the efficiency of the proposed algorithms.

\end{itemize}

The remainder of this paper is organized as follows. Section~\ref{sec_sys_model} introduces the preliminaries, and Sections ~\ref{sec_tran_stra} and ~\ref{sec_tran_stra_th} present the detection error probability (DEP) analysis at Willie, transmission time minimization under OH transmission protocol and TH transmission protocol, respectively. The numerical results are provided in Section~\ref{sec_num}. Finally, Section~\ref{sec_con} concludes this work.

\section{Preliminaries} \label{sec_sys_model}
\subsection{System Model}
As shown in Fig.~\ref{fig_sys_model}, the UAV (Alice, $a$) acts as a flying base station in a three-dimensional area with the altitude $h$. It disseminates a common covert information (CI) of $M$ bits to a total size of $G$ ground users (GUs), denoted by the set $\mathbb{G}=\{1, 2, \cdots g \cdots, G \}$. Meanwhile, a ground warden (Willie, $w$) wants to detect the transmission. In this system, GUs are randomly distributed following the Poisson point process with a density of $\lambda$ within a circular area of radius $r$ covered by the UAV. At the same time, Willie is situated outside of this circular area, at a distance of $d_w$ from the center of this circle. Let $\mathbf{q}_{a}=[x_{a},y_{a}]$ denotes the horizontal coordinate of the UAV, and the horizontal coordinates of the $g$-th GU and Willie are denoted as $\mathbf{q}_{g}=[x_{g},y_{g}]$ and $\mathbf{q}_{w}=[x_{w},y_{w}]$, respectively. Thus, the horizontal distance from the UAV to the $g$-th GU and the distance from the UAV to Willie are given by $l_{ag}=\Vert\mathbf{q}_a-\mathbf{q}_g\Vert$ and $l_{aw}=\Vert\mathbf{q}_a-\mathbf{q}_w\Vert$, respectively. The distance from the UAV to the $g$-th GU and the distance from the UAV to Willie are given by $d_{ag}=\sqrt{l_{ag}^2+h^2}$ and $d_{aw}=\sqrt{l_{aw}^2+h^2}$, respectively.
\begin{figure}[tb]
\centering
\includegraphics[width=0.45\textwidth]{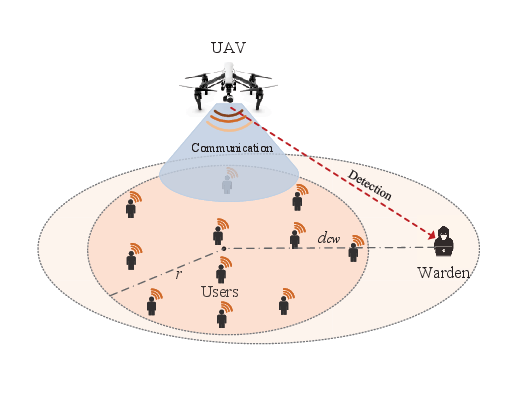}
\caption{Illustration of the system model.}
\label{fig_sys_model}
\end{figure}

\subsection{Channel Model}
\subsubsection{Air to Ground Channel Model}
The air-to-ground (A2G) channels from the UAV to ground nodes (i.e., GUs and Willie) are modeled as the probabilistic line-of-sight (LoS) channels. Following the model presented in \cite{yan2021optimal}, the probability of the LoS channel from the UAV to the $g$-th GU and Willie can be respectively given by
\begin{equation}\label{LoS_pro_ag}
\begin{aligned}
P_{LoS}(\theta_{g})=\frac{1}{1+e\exp{(-f[\theta_{g}-e])}},
\end{aligned}
\end{equation} 
and 
\begin{equation}\label{LoS_pro_aw}
\begin{aligned}
P_{LoS}(\theta_{w})=\frac{1}{1+e\exp{(-f[\theta_{w}-e])}},
\end{aligned}
\end{equation} 
where $e$ and $f$ represent the S-curve parameters related the specific communication environment; $\theta_{g}=\frac{180^{\circ}}{\pi}\arctan(\frac{h}{l_{ag}})$ and $\theta_{w}=\frac{180^{\circ}}{\pi}\arctan(\frac{h}{l_{aw}})$ are the degree of the elevation angles for the $g$-th GU relative to the UAV and Willie relative to the UAV, respectively. Therefore, as per \cite{shu2019delay, yan2021optimal,su2023optimal}, the channel gain $H_{ag}$ from the UAV to the $g$-th GU can be given by
\begin{equation}\label{channel_gain_ag}
\begin{aligned}
H_{ag}= P_{LoS}(\theta_{g})d_{ag}^{\alpha_{L}},
\end{aligned}
\end{equation}
where $\alpha_{L}<0$ is the A2G LoS channel fading loss exponent. The channel gain $H_{aw}$ from the UAV to Willie can be given by
\begin{equation}\label{channel_gain_aw}
\begin{aligned}
H_{aw}= P_{LoS}(\theta_{w})d_{aw}^{\alpha_{L}}.
\end{aligned}
\end{equation}

\subsubsection{Ground to Ground Channel Model} 
The ground-to-ground (G2G) channels among GUs and Willie are assumed to follow the quasi-static Rayleigh fading channel model, where each channel remains constant in one slot while changing independently and randomly from one slot to another according to Rayleigh distribution with zero mean and unit variance. Suppose that a GU is acting as a relay ($r$), the channel from the relay to the $g$-th GU is denoted by $h_{rg}$ ($r,g \in \mathbb{G}, r\ne g$), the associated channel gain $H_{rg}$ can be given by
\begin{equation}\label{channel_gain_rg}
\begin{aligned}
H_{rg}= \vert h_{rg}\vert ^{2}l_{rg}^{\alpha_{N}},
\end{aligned}
\end{equation}
where $l_{rg}$ is the distance from the relay to the $g$-th GU and $\alpha_{N}<0$ is the G2G channel fading loss exponent. The channel from the relay to Willie is denoted by $h_{rw}$, the associated channel gain $H_{rw}$ can be given by
\begin{equation}\label{channel_gain_rw}
\begin{aligned}
H_{rw}= \vert h_{rw}\vert ^{2}l_{rw}^{\alpha_{N}},
\end{aligned}
\end{equation}
where $l_{rw}$ is the distance from the relay to Willie. 

In this paper, we assume the UAV, GUs and Willie only have the knowledge of the statistical characterizations of the channel state information.

\subsection{Transmission Scheme}
In this paper, we design two covert multicast transmission schemes for the concerned system based on one-hop (OH) transmission and two-hop (TH) transmission, respectively. 

\subsubsection{OH Transmission Scheme}
Under the OH transmission scheme, given the hover location of the UAV, the UAV directly multicasts CI to the GUs. As illustrate in fig.~\ref{fig_comm_model_oh}, the UAV first files to the target location and divides the CI into $m$ blocks. To confuse Willie, the UAV selects each time slot with probability $\pi_{1}$ (i.e., does not select this time slot with probability $\pi_{0}=1-\pi_{1}$) to transmit a block. If the UAV decides to transmit one block in a time slot, she encodes this block with a secret Gaussian codebook pre-shared among the UAV and GUs to ensure reliable transmission. This block is thus encoded into $n$ real value symbols $\mathbf{x}=(\mathbf{x}[1],\cdots,\mathbf{x}[i])$, where each symbol satisfies $\mathbf{x}[i]\sim \mathcal{N} (0, 1)$ and $i=1,2,\cdots,n$. Finally, she multicasts each symbol $\mathbf{x}[i]$ to the GUs. Overall, the received signal at the $g$-th GU for the $i$-th symbol under the OH transmission scheme can be given by
\begin{equation}
\begin{aligned} \label{rec_GU}
\mathbf{y}_{g}[i]=\sqrt{P_{a}H_{ag}}\mathbf{x}[i] +\mathbf{n}_{g}[i],
\end{aligned}
\end{equation}
where $P_{a}$ is the transmit power of the UAV; $\mathbf{n}_{g}[i] \sim  \mathcal{N} (0, \sigma_{g}^{2})$ is the additive white Gaussian noise (AWGN) with mean zero and variance $\sigma_{g}^{2}$ at the $g$-th GU. 
\begin{figure}[tb]
\centering
\includegraphics[width=0.45\textwidth]{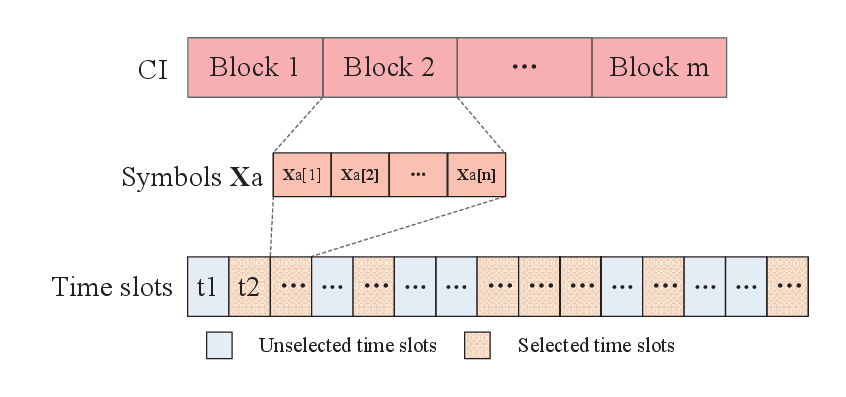}
\caption{Illustration of the OH Transmission Scheme.}
\label{fig_comm_model_oh}
\end{figure}

\subsubsection{TH Transmission Scheme}
Under the TH transmission scheme, a GU is selected to act as the relay, the UAV transmits CI to this relay, and the relay then multicasts CI to the remaining GUs. Specifically, the UAV flies to the location just above the relay, then similar to OH transmission protocol, as shown in fig.~\ref{fig_comm_model_th}, she divides the CI into blocks, selects a time slot with probability $\pi_{1}$, and encodes each block into $n$ real value symbols. Instead of multicasting CI to the GUs of the OH transmission scheme, the UAV transmits symbols of each block to the relay in a selected time slot. The relay then decodes the signals and re-encodes them into symbols using the same codebook and approach as the UAV. After that, the relay forwards these symbols of one block to the remaining GUs in a multicast manner during the following time slot when the UAV does not transmit information. To achieve this, each GU is equipped with two independent antennas for reception and transmission, enabling it to operate in half-duplex (HD) transmission mode. Additionally, the relay acts as the buffer-aided relay to store the symbols received from the UAV. The buffer queues follow the first-in-first-out (FIFO) principle. Moreover, the relay operates in the decode-and-forward (DF) mode with no processing delay \cite{wang2018covert}. Overall, under the TH transmission scheme, for the $i$-th symbol, the received signals at the relay $\mathbf{y}_{r}[i]$, the $g$-th remaining GU when the UAV transmits CI $\mathbf{y}_{g}^{u}[i]$, and the $g$-th remaining GU when the relay transmits CI $\mathbf{y}_{g}^{r}[i]$ are respectively given by 
\begin{align} 
&\mathbf{y}_{r}[i]=\sqrt{P_{a}H_{ar}}\mathbf{x}[i] +\mathbf{n}_{r}[i],\label{rec_relay}\\
&\mathbf{y}_{g}^{u}[i]=\sqrt{P_{a}H_{ag}}\mathbf{x}[i] + \mathbf{n}_{g}[i], \label{rec_relay_u_g}\\
&\mathbf{y}_{g}^{r}[i]=\sqrt{P_{r}H_{rg}}\mathbf{t}[i] + \mathbf{n}_{g}[i],\label{rec_relay_r_g}
\end{align}
where $H_{ar}$ is similar to the equation (\ref{channel_gain_ag}), denotes the channel gain from the UAV to the relay; $\mathbf{n}_{r}[i] \sim  \mathcal{N} (0, \sigma_{r}^{2})$ is the AWGN at the relay; $P_{r}$ is the predetermined transmit power at the relay; $\mathbf{t}[i]=\mathbf{x}[i]$ is the transmit symbols from the relay.

\begin{figure}[tb]
\centering
\includegraphics[width=0.45\textwidth]{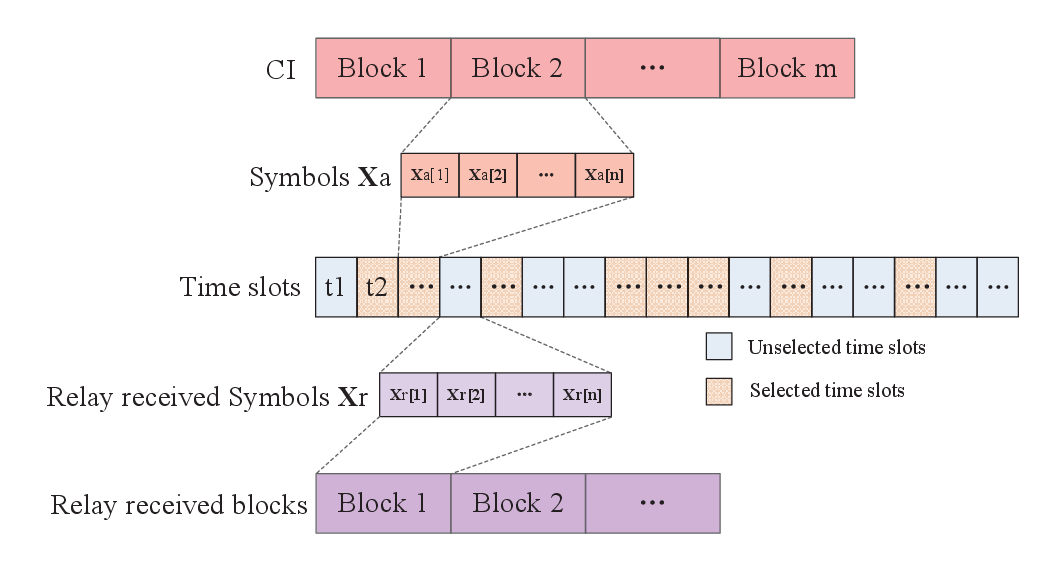}
\caption{Illustration of the TH Transmission Scheme.}
\label{fig_comm_model_th}
\end{figure}
 
\subsection{Detection Model at Willie}
Based on the observations over all time slots, Willie attempts to determine whether the transmission happened or not at the UAV. To achieve this, she adopts the hypothesis test to distinguish between the null hypothesis $H_0$, which infers that the UAV did not conduct transmission, and the alternative hypothesis $H_1$, which suggests that the UAV did transmit. The detection models of Willie under the two transmission protocols are as follows. 

\subsubsection{OH Transmission Scheme}
To detect the transmission of UAV to GUs under OH transmission scheme, Willie adopts the hypothesis test and distinguishes the following two received signals.
\begin{equation}
\begin{aligned} \label{Willie_hypo_oh}
\begin{cases}
 H_{0} :& \mathbf{y}_{w}[i]= \mathbf{n}_{w}[i],\\
 H_{1} :& \mathbf{y}_{w}[i]=\sqrt{P_{a}H_{aw}}\mathbf{x}[i] +\mathbf{n}_{w}[i],
\end{cases}
\end{aligned}
\end{equation}
where $\mathbf{n}_{w}[i]$ is the AWGN with mean zero and variance $\sigma_{w}^{2}$ at Willie. 

\subsubsection{TH Transmission Scheme}
Similarly, to detect the transmission of the UAV to the relay under TH transmission scheme, Willie should distinguishes the following two received signals.
\begin{equation}
\begin{aligned} \label{Willie_hypo_th}
 \!\begin{cases}
H_{0} :& \mathbf{y}_{w}[i]=\sqrt{P_{r}H_{rw}}\mathbf{t}[i]+ \mathbf{n}_{w}[i],\\
H_{1} :& \mathbf{y}_{w}[i]=\sqrt{P_{a}H_{aw}}\mathbf{x}[i] +\mathbf{n}_{w}[i].
\end{cases}
\end{aligned}
\end{equation}

In the context of covert communication, Willie aims to minimize the probability of missed detection (MD) $P_{MD}$ and the probability of false alarm (FA) $P_{FA}$. Denote Willie's decisions in favor of $H_0$ and $H_1$ as $D_{0}$ and $D_{1}$, respectively. Thus, $P_{FA}$ can be expressed as $P_{FA}=Pr(D_{0}|H_{1})$, and $P_{MD}=Pr(D_{1}|H_{0})$. Therefore, the total DEP $\xi$ at Willie is formulated as 
\begin{equation}\label{total_error_rate}
\xi = \pi_{0}P_{FA}+\pi_{1}P_{MD}.
\end{equation}

\section{Minimum Covert Transmission Time Under the OH Transmission Scheme}\label{sec_tran_stra}

In this section, as the UAV adopts the OH (o) transmission scheme, for a given hover location $\mathbf{U}(\mathbf{q}_{a}, h)$ of the UAV, we first present the optimal DEP at Willie. Based on the optimal DEP, we formulate the minimize the covert transmission time as an optimization problem. Then, the optimization problem is solved by jointly designing the optimal transmit power $P_{a}$ and the prior transmission probability $\rho_{1}$. With the altitude $h$ of the UAV, the optimal horizontal location $\mathbf{q}_{a}$ of the UAV is non-convex and challenging to solve directly, we thus propose a PSO-based optimization algorithm to get the optimal horizontal location.

\subsection{Detection Error Probability at Willie}\label{dep_oh}

We first analyze the optimal DEP at Willie. Assume that Willie adopts the optimal statistical hypothesis test that minimizes $\xi$, as per \cite{bash2013limits}, the optimal DEP $\xi^{*}$ at Willie can be determined as
\begin{equation}\label{min_error_rate_v}
\begin{aligned}
\xi^{*}& \ge \min(\rho_{0},\rho_{1})-\max(\rho_{0},\rho_{1})\mathcal{V}_{T}(\mathbb{P}_{1}^{n},\mathbb{P}_{0}^{n}),
\end{aligned}
\end{equation}
where $\mathcal{V}_{T}(\mathbb{P}_{1}^{n},\mathbb{P}_{0}^{n})$ is the total variation between $\mathbb{P}_{1}^{n}$ and $\mathbb{P}_{0}^{n}$. $\mathbb{P}_{0}^{n}$ (resp. $\mathbb{P}_{1}^{n})$ is the distribution of the sequence $\mathbf{y}_{w}=\{\mathbf{y}_{w}[i]\}_{i=1}^{n}$ received in $n$ channel use when $\mathcal{H}_{0}$ (resp. $\mathcal{H}_{1}$) is true.

Due to the intractability of the expression $\mathcal{V}_{T}(\mathbb{P}_{1}^{n},\mathbb{P}_{0}^{n})$, similar to \cite{bash2013limits}, we can obtain an upper bound of it according to Pinsker’s inequality, which is determined as
\begin{equation}\label{min_error_rate_v_up}
\mathcal{V}_{T}(\mathbb{P}_{1}^{n},\mathbb{P}_{0}^{n}) \leq \sqrt{\frac{1}{2} \mathcal{D}(\mathbb{P}_{1}^{n},\mathbb{P}_{0}^{n})},
\end{equation}
where $\mathcal{D}(\mathbb{P}_{1}^{n},\mathbb{P}_{0}^{n})$ denotes the relative entropy (also known as Kullback-Leibler (KL) divergence) between $\mathbb{P}_{1}^{n}$ and $\mathbb{P}_{0}^{n}$. From the chain rule for relative entropy, we have
\begin{equation}\label{total_convert_con}
\begin{aligned}
\mathcal{D}(\mathbb{P}_{1}^{n}, \mathbb{P}_{0}^{n}) = n \mathcal{D}(\mathbb{P}_{1},&\mathbb{P}_{0}).
\end{aligned}
\end{equation}
According to (\ref{Willie_hypo_oh}), the KL divergence $\mathcal{D}(\mathbb{P}_{1},\mathbb{P}_{0})$ from $\mathbb{P}_{1}$ to $\mathbb{P}_{0}$ can be calculated as
\begin{equation}\label{min_error_rate_v_up_f}
\begin{aligned}
\mathcal{D}(\mathbb{P}_{1},\mathbb{P}_{0})=\frac{1}{2}\left [ \gamma_{aw}^{(o)}-\ln \left ( 1+\gamma_{aw}^{(o)}\right ) \right],
\end{aligned}
\end{equation}
where $\gamma_{aw}^{(o)}=\frac{P_{a}H_{aw}}{\sigma^{2}_{w}}$ denotes the received SNR of Willie under the OH transmission scheme, $P_{a}$ is the transmit power of UAV.

Overall, the lower bound of the optimal DEP $\xi^{*}$ at Willie can be expressed as
\begin{equation}\label{th_min_convert_con}
\begin{aligned}
\xi^{*} \ge\min(\rho_{0},\rho_{1})-\max(\rho_{0},\rho_{1})\sqrt{\frac{n}{4}\left(\gamma_{aw}^{(o)}-\ln \left ( 1+\gamma_{aw}^{(o)}\right )\right )}.
\end{aligned}
\end{equation}

\subsection{Transmission Time Minimization}\label{oh_time}
To achieve the minimum covert transmission time, for a given hover location $\mathbf{U}$ of the UAV, we first calculate the effective multicast throughput $C_{g}$ from the UAV to the $g$-th GU. With the non-negligible decoding error probability of finite channel use, according to \cite{xiang2020secure}, the effective throughput $C_{ag}$ can be given by
\begin{equation}\label{effect_throughput}
\begin{aligned}
C_{g}=\rho_{1}nR (1- \eta_{ag} ),
\end{aligned}
\end{equation}
where $R$ is the given target rate and $\eta_{ag} $ is decoding error probability at $g$-th GU. As per \cite{xiang2020secure}, $\eta_{ag}$ can be determined as
\begin{equation}\label{decode_error_pro}
\eta_{ag}  =  Q\left ( \frac{\sqrt{n}(1+\gamma _{ag})(\ln (1+\gamma _{ag})+\frac{1}{2}\ln n-R\ln 2) }{\sqrt{\gamma _{ag}(\gamma _{ag}+2)} } \right ),
\end{equation}
where $Q(\cdot)$ is the Q-function; $\gamma_{ag}$ denotes the received SNR and $\sigma_{g}^{2}$ is the power of AWGN at $g$-th GU, respectively. To make (\ref{decode_error_pro}) mathematically tractable for the minimization of transmission time, we employ a linear approximation function for the Q-function as detailed in \cite{durisi2016toward}. Thus, the decoding error $\eta_{ag}$ can be rewrite as
\begin{equation}\label{decoding_error_rate}
\begin{aligned}
\eta_{ag} \approx \begin{cases}
1,&   \gamma _{ag}< \vartheta-\frac{1}{2\varsigma} ,\\
-\varsigma  (\gamma _{ag}-\vartheta )+\frac{1}{2},&\vartheta-\frac{1 }{2\varsigma} \leq \gamma _{ag}\leq \vartheta+\frac{1 }{2\varsigma}, \\
 0,&  \gamma _{ag}>\vartheta+\frac{1}{2\varsigma }.
\end{cases}
\end{aligned}
\end{equation}
where $\varsigma =\sqrt{\frac{n}{2\pi (\exp(2R)-1)}} $ and $\vartheta=\exp (R)-1$.
We focus on the case where $0 < \eta_{ag}< 1$. By substituting this specific case of (\ref{decoding_error_rate}) into (\ref{effect_throughput}), the effective throughput $C_{g}$ from the UAV to $g$-th GU can be determined as
\begin{equation}\label{transmission_throughput_overall_app}
\begin{aligned}
C_{g} = \frac{1}{2} \rho_{1} n R \left(1+2\varsigma  (\gamma _{ag}-\vartheta )\right).
\end{aligned}
\end{equation}
The corresponding transmission time from UAV to the $g$-th GU is thus given by
\begin{equation}\label{transmission_time_overall_app}
\begin{aligned}
T_{g} = \frac{2M}{ \rho_{1} n R \left(1+2\varsigma  (\gamma _{ag}-\vartheta )\right)}.
\end{aligned}
\end{equation}
We then given the covertness requirement, as per \cite{lu2021joint, wu2023irs}, the fundamental covertness requirement can be expressed as
\begin{equation}\label{covert_constraint_oh}
\begin{aligned}
\xi^{*} &\ge \min(\rho_{0},\rho_{1})- 2\min(\rho_{0},\rho_{1})\epsilon,
\end{aligned}
\end{equation}
where $\epsilon>0$ is the covertness constraint. According to (\ref{th_min_convert_con}), the covertness requirement can be finally expressed as
\begin{equation}\label{covert_constraint_oh_final}
\begin{aligned}
\frac{\max(\rho_{0},\rho_{1})}{4\min(\rho_{0},\rho_{1})} \sqrt{n\left(\gamma_{aw}^{(o)}-\ln \left ( 1+\gamma_{aw}^{(o)}\right )\right )} \le \epsilon.
\end{aligned}
\end{equation}
Notice that the overall transmission time $T$ is fundamentally limited by the bottleneck link, which can be given by
\begin{equation}\label{overall_transmission_time}
\begin{aligned}
T= \mathop{\max}\limits_{g \in \mathbb{G} }\quad  T_{g}.
\end{aligned}
\end{equation}
Overall, given the hover location $\mathbf{U}$ of the UAV, the problem of minimizing the overall transmission time while ensuring the covertness can be formulated as
\begin{subequations} \label{optimal_time}
\begin{align}
\mathop{\min}\limits_{P_{a}, \rho_{1} } & \quad  T  \label{p_time},\\
 s.t. \quad  & \frac{\max(\rho_{0},\rho_{1})}{4\min(\rho_{0},\rho_{1})} \sqrt{n\left(\gamma_{aw}^{(o)}-\ln \left ( 1+\gamma_{aw}^{(o)}\right )\right )} \le \epsilon \label{p_covert}, \\
\quad  & P_{min}\le P_{a} \label{power_con}, \\
 \quad  & \rho_{min}<\rho_{1}<\rho_{max} \label{p_priori},
\end{align}
\end{subequations}
where $P_{min}=\frac{(\vartheta-\frac{1}{2\varsigma})\sigma _{\hat{g}}^{2}}{H_{a\hat{g}}}$ in (\ref{power_con}) ensures the feasibility of the specific case in (\ref{decoding_error_rate}), $\hat{g}$ denotes the GU with worst communication link from it to the UAV, $0<\rho_{min} \le 0.5$ and $0.5<\rho_{max} < 1$ to keep that the denominator in (\ref{p_covert}) will not be 0.
 

\subsection{Joint Optimal Transmit Power and Prior Probability with the Given Hover Location for the UAV}
Given the hover  location $\mathbf{U}$ of the UAV, the joint solution for the optimal transmit power and prior transmission probability, and the corresponding minimum transmission time can be summarized in the following theorem.

\begin{theorem}\label{th_min_time}
For the considered UAV multicast system, when the UAV transmits $M$ bits CI to the GUs positioned at the horizontal coordinate $\mathbf{q}_{a}$ and altitude $h$ over $n$ channel use in each time slot, along with the target rate $R$ and the covertness constraint $\epsilon$. The optimal transmit power $P_{a}^{*}$ and prior transmission probability $\rho_{1}^{*}$, and the associated minimum transmission time $T^{*}$ for the optimization problem (\ref{optimal_time}) are determined as
\begin{equation}\label{optimal_transmission_power}
\begin{aligned}
P_{a}^{*} = \frac{4\epsilon\sigma^{2}_{w}(1-\rho_{1}^{*})}{H_{aw}\rho_{1}^{*}}\sqrt{\frac{2}{n}}
\end{aligned}
\end{equation}
and
\begin{equation}\label{optimal_transmission_prior}
\begin{aligned}
\rho_{1}^{*} = \begin{cases}
0.5,&   A \ge \frac{1-2\varsigma \vartheta }{2\varsigma },\\
\rho_{max},& otherwise, 
\end{cases}
\end{aligned}
\end{equation}
where $\rho_{max} = \frac{4\epsilon\sigma^{2}_{w}H_{a\hat{g}}}{4\epsilon\sigma^{2}_{w}H_{a\hat{g}} + \left(\vartheta - \frac{1}{2\varsigma}\right)\sigma_{\hat{g}}^{2} H_{aw} \sqrt{\frac{n}{2}}}$ and $A=\frac{4\epsilon\sigma^{2}_{w}H_{a\hat{g}}}{\sigma^{2}_{g}H_{aw} }\sqrt{\frac{2}{n}}$; $\hat{g}$ represent the farthest GU from the UAV; $\varsigma$ and $\vartheta$ are given in (\ref{decoding_error_rate}); $H_{aw}$ and $H_{a\hat{g}}$ can be obtained by substituting $\mathbf{q}_{a}$ and $h$ into (\ref{channel_gain_ag}) and (\ref{channel_gain_aw}), respectively.
The corresponding minimum transmission time is thus determined as
\begin{equation}\label{optimal_transmission_time}
\begin{aligned}
T^{*}=\frac{2M}{\rho_{1}^{*}n R \left(1+ 2\varsigma  (\gamma _{a\hat{g}}^{*}-\vartheta ) \right)},\\
\end{aligned}
\end{equation}
where $\gamma_{a\hat{g}}^{*}=\frac{P_{a}^{*}H_{a\hat{g}}}{\sigma^{2}_{\hat{g}}}$.
\end{theorem}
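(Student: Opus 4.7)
The plan is to reduce the joint optimization (\ref{optimal_time}) to a one-dimensional problem in $\rho_{1}$ by first identifying the bottleneck link and then exploiting monotonicity in $P_{a}$. Because $T = \max_{g \in \mathbb{G}} T_{g}$ and each $T_{g}$ in (\ref{transmission_time_overall_app}) is strictly decreasing in $\gamma_{ag}$, the binding receiver is the GU $\hat{g}$ farthest from the UAV (for which both $P_{LoS}(\theta_{g})$ and $d_{ag}^{\alpha_{L}}$ are smallest). I would therefore replace $T$ by $T_{\hat{g}}$ throughout and only optimize over this worst link.

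Since $T_{\hat{g}}$ is strictly decreasing in $P_{a}$, any optimum must saturate the covertness constraint (\ref{p_covert}) with equality. Because the warden's SNR $\gamma_{aw}^{(o)}$ is small in the covert regime, I would linearize the LHS of (\ref{p_covert}) via the Taylor expansion $\gamma - \ln(1+\gamma) \approx \gamma^{2}/2$, turning it into a linear equation in $\gamma_{aw}^{(o)}$. Working on the upper branch $\rho_{1} \ge 1/2$ (so that $\max(\rho_{0},\rho_{1}) = \rho_{1}$ and $\min(\rho_{0},\rho_{1}) = 1 - \rho_{1}$) and solving for $P_{a}$ then yields exactly (\ref{optimal_transmission_power}).

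Substituting this $P_{a}^{*}(\rho_{1})$ back into $T_{\hat{g}}$ produces, for $\rho_{1} \in [1/2, \rho_{max})$,
\begin{equation*}
T(\rho_{1}) = \frac{2M}{nR\bigl[(1 - 2\varsigma\vartheta - 2\varsigma A)\rho_{1} + 2\varsigma A\bigr]},
\end{equation*}
with $A$ as in the statement. Since the denominator is affine in $\rho_{1}$, $T$ is monotone on this interval, and the sign of the slope $1 - 2\varsigma\vartheta - 2\varsigma A$ cleanly splits the problem into the two cases of (\ref{optimal_transmission_prior}): the slope is non-positive iff $A \ge (1 - 2\varsigma\vartheta)/(2\varsigma)$, in which case $\rho_{1}^{*} = 1/2$; otherwise $\rho_{1}^{*} = \rho_{max}$. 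The boundary $\rho_{max}$ itself I would obtain by equating $P_{a}^{*}(\rho_{1})$ with the lower power limit $P_{min}$ in (\ref{power_con}) and solving for $\rho_{1}$. Plugging $(P_{a}^{*}, \rho_{1}^{*})$ into (\ref{transmission_time_overall_app}) then delivers (\ref{optimal_transmission_time}).

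The main obstacle is that the $\max/\min$ ratio in (\ref{p_covert}) makes the constraint non-smooth at $\rho_{1} = 1/2$, so I must justify restricting attention to the upper branch. My plan here is to invoke a symmetry: the covertness constraint is invariant under the swap $\rho_{0} \leftrightarrow \rho_{1}$, equivalently $\rho_{1} \to 1 - \rho_{1}$; consequently the saturated powers on the two branches satisfy $P_{a}^{*}(\rho_{1}') = P_{a}^{*}(1 - \rho_{1}')$, so reflected points produce the same $\gamma_{a\hat{g}}^{*}$. Since $T \propto 1/\rho_{1}$ at fixed SNR, any feasible $\rho_{1}' < 1/2$ is strictly dominated by its reflection $1 - \rho_{1}' > 1/2$ (which is also feasible, because the same symmetry maps the lower-branch power constraint onto the upper-branch one). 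This eliminates the lower branch from the search and, together with the affine analysis above, completes the characterization in (\ref{optimal_transmission_prior}).
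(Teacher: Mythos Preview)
Your proposal is correct and follows essentially the same architecture as the paper's proof: reduce to the bottleneck user $\hat{g}$, saturate the covertness constraint using the Taylor approximation $\gamma-\ln(1+\gamma)\approx \gamma^{2}/2$, substitute the resulting $P_{a}^{*}(\rho_{1})$ back into $C_{\hat{g}}$, and resolve the remaining one-variable problem by a sign analysis of the affine slope $1-2\varsigma\vartheta-2\varsigma A$, with $\rho_{\max}$ obtained from $P_{a}^{*}(\rho_{1})=P_{\min}$.

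The one point of departure is how you dispose of the region $\rho_{1}<\tfrac{1}{2}$. The paper treats the two branches separately: on $\rho_{1}\le\rho_{0}$ it plugs in $P_{a}^{*}=\frac{4\epsilon\sigma_{w}^{2}\rho_{1}}{H_{aw}(1-\rho_{1})}\sqrt{2/n}$ and observes that $C_{\hat{g}}$ is monotone increasing (product of two positive increasing factors), forcing $\rho_{1}^{*}=0.5$ there as well; it then combines the two branch optima. Your symmetry argument---that (\ref{p_covert}) and hence $P_{a}^{*}$ are invariant under $\rho_{1}\mapsto 1-\rho_{1}$ while $T_{\hat{g}}\propto 1/\rho_{1}$ at fixed SNR, so every lower-branch point is strictly dominated by its reflection---is a cleaner and more conceptual way to reach the same conclusion, and it simultaneously explains why the upper-branch feasibility interval $[\tfrac{1}{2},\rho_{\max}]$ is exactly the reflection of the lower-branch one. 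Either route lands on (\ref{optimal_transmission_prior}).
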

\begin{proof}
For a given location $\mathbf{U}$ of the UAV, according to (\ref{channel_gain_ag}), we can conclude that the worst communication link from the UAV to GUs is the one farthest from the UAV to GU, using $\hat{g}$ denotes the GU farthest from the UAV, the overall transmission time refers to the transmission time from the UAV to GU $\hat{g}$ (i.e., $T=T_{\hat{g}}$). Then, based on this worst communication link, we jointly explore the optimal transmit power and prior transmission probability for the UAV to minimize the transmission time. From (\ref{transmission_time_overall_app}), it is easy to find that $T_{\hat{g}}$ is a decreasing function of $C_{\hat{g}}$. The optimal problem can be transformed into maximizing $C_{\hat{g}}$. We first calculate the optimal transmit power while satisfying the covertness constraint. According to (\ref{p_covert}), the covertness requirement can be rewritten as

\begin{equation}\label{min_convert_con}
\begin{aligned}
\frac{P_{a} H_{aw}}{\sigma^{2}_{w}}-\ln \left ( 1+\frac{P_{a} H_{aw}}{\sigma^{2}_{w}}\right )  \leq \frac{16\epsilon^{2}}{n}\left(\frac{\min(\rho_{0},\rho_{1})}{\max(\rho_{0},\rho_{1})}\right)^{2}.
\end{aligned}
\end{equation}

Given the typically low transmit power in covert communications and thus a low SNR, we can leverage the Taylor series expansion and the inequality $\ln(1+x)\ge x-\frac{x^{2}}{2}$. Thus, (\ref{min_convert_con}) can be further approximated as

\begin{equation}\label{min_convert_con_app}
\begin{aligned}
\frac{P_{a} H_{aw}}{\sigma^{2}_{w}} \leq 4\epsilon\sqrt{\frac{2}{n}}\frac{\min(\rho_{0},\rho_{1})}{\max(\rho_{0},\rho_{1})}.
\end{aligned}
\end{equation}
The optimal transmit power $P_{a}^{*}$ is attained by setting the inequality in (\ref{min_convert_con_app}) as equality, which can be given by
\begin{equation}\label{op_power_maxmin_oh}
\begin{aligned}
P_{a}^{*}=\frac{4\epsilon\sigma^{2}_{w}}{H_{aw}}\sqrt{\frac{2}{n}}\frac{\min(\rho_{0},\rho_{1})}{\max(\rho_{0},\rho_{1})}.
\end{aligned}
\end{equation}
From (\ref{op_power_maxmin_oh}), the optimal transmit power is highly related to the prior probabilities. Therefore, we address the optimization problem by considering two cases: $\rho_{1} \le \rho_{0}$ and $\rho_{1}> \rho_{0}$.

\subsubsection{ $\rho_{1} \le \rho_{0}$}
With $\rho_{1} \le \rho_{0}$, the optimal transmit power can be determined as 
\begin{equation}\label{op_power_case1_oh}
\begin{aligned}
P_{a}^{*}=\frac{4\epsilon\sigma^{2}_{w}\rho_{1}}{H_{aw}(1-\rho_{1})}\sqrt{\frac{2}{n}}.
\end{aligned}
\end{equation}
Substituting $P_{a}^{*}$ into (\ref{transmission_throughput_overall_app}), according to the monotonicity of the product of increasing functions, $\frac{\partial C_{\hat{g}}}{\partial \rho_{1}} >0$. Thus, we can conclude that, for $\rho_{1}>0$, $C_{\hat{g}}$ monotonically increases with $\rho_{1}$, the optimal $\rho_{1}^{*}$ is the maximum value of its possible range. Hence, the optimal $\rho_{1}^{*}=0.5$. The corresponding optimal transmit power $P_{a}^{*}=\frac{4\epsilon\sigma^{2}_{w}}{H_{aw}}\sqrt{\frac{2}{n}}$.

\subsubsection{ $\rho_{1}\ge \rho_{0}$} Similarly, with $\rho_{1}\ge \rho_{0}$, the optimal transmit power can be given by
\begin{equation}\label{op_power_case2_oh}
\begin{aligned}
P_{a}^{*}=\frac{4\epsilon\sigma^{2}_{w}(1-\rho_{1})}{H_{aw}\rho_{1}}\sqrt{\frac{2}{n}}.
\end{aligned}
\end{equation}
Substituting $P_{a}^{*}$ into (\ref{transmission_throughput_overall_app}), the effective throughput can be expressed as
\begin{equation}\label{op_ec_fun}
\begin{aligned}
C_{\hat{g}}=\rho_{1} n R \left(1+2\varsigma  \left( \frac{A}{\rho_{1}}- A -\vartheta \right)\right),
\end{aligned}
\end{equation}
where $A=\frac{4\epsilon\sigma^{2}_{w}H_{a\hat{g}}}{\sigma^{2}_{g}H_{aw} }\sqrt{\frac{2}{n}}$. The derivative of $C_{\hat{g}}$ can be given by
\begin{equation}\label{op_ec_fun}
\begin{aligned}
\frac{\partial C_{\hat{g}}}{\partial \rho_{1}} =n R \left(1-2\varsigma  \left(  A +\vartheta \right)\right).
\end{aligned}
\end{equation}
Thus, if $ A \ge \frac{1-2\varsigma \vartheta }{2\varsigma } $, the optimal $\rho_{1}^{*}$ is the minimum value of its possible range, which is $\rho_{1}^{*}=0.5$. The corresponding optimal transmit power $P_{a}^{*}=\frac{4\epsilon\sigma^{2}_{w}}{H_{aw}}\sqrt{\frac{2}{n}}$. If $ A < \frac{1-2\varsigma \vartheta }{2\varsigma } $, the optimal $\rho_{1}^{*}$ is the maximum value of its possible range. According to the transmit power constraint of (\ref{power_con}), the upper bound of the prior transmission probability can be given by 
\begin{equation}\label{op_upper_bound_rho}
\begin{aligned}
\rho_{max} = \frac{4\epsilon\sigma^{2}_{w}H_{a\hat{g}}}{4\epsilon\sigma^{2}_{w}H_{a\hat{g}} + \left(\vartheta - \frac{1}{2\varsigma}\right)\sigma_{\hat{g}}^{2} H_{aw} \sqrt{\frac{n}{2}}}.
\end{aligned}
\end{equation}
Thus, the the optimal $\rho_{1}^{*}= \rho_{max}$. The corresponding optimal transmit power is by substituting  $\rho_{1}^{*}$ into (\ref{op_power_case2_oh}).

Overall, the jointly optimal transmit power and prior transmission probability can be given by (\ref{optimal_transmission_power}) and (\ref{optimal_transmission_prior}), the corresponding minimum transmission time is determined as (\ref{optimal_transmission_time}). 
\end{proof}

\subsection{Horizontal Location Optimization for the UAV}
In this subsection, we intend to jointly explore the joint optimal transmit power, prior transmission probability and horizontal location of the UAV to achieve the overall minimum transmission time. The optimal problem is challenging to solve mathematically since the variability in the worst communication links is associated with different horizontal locations. We thus adopt the PSO algorithm \cite{sanchez2019distributed} to find the optimal horizontal location of the UAV. In the PSO algorithm, a swarm of particles moves in a \textit{D}-dimensional search space to optimize the fitness function. During the searching process, each particle assesses its fitness for every iteration and adjusts its position based on its historical and global best positions. The $m$-th particle's position is determined as
\begin{equation} \label{pso_o}
\begin{aligned}
\mathbf{o}_{m}(k+1)=\mathbf{o}_{m}(k)+\mathbf{v}_{m}(k+1).
\end{aligned}
\end{equation}
where $k$ is the iteration index; $\mathbf{o}_{m}(k)$ is current position and $\mathbf{v}_{m}(k+1)$ is the velocity of the $m$-th particle, which is given by
\begin{equation} \label{pso_v}
\begin{aligned}
 \mathbf{v}_{m}(k+1)=& w\mathbf{v}_{m}(k)+c_{1}\psi_{1}(\mathbf{P}_{m}-\mathbf{o}_{m}(k))\\&+c_{2}\psi_{2}(\mathbf{P}_{g}-\mathbf{o}_{m}(k)),
\end{aligned}
\end{equation}
where $w$ is the inertial weight coefficient of each particle; $c_1$ and $c_2$ are acceleration constants; $\psi_{1}$ and $\psi_{2}$ are the local and global learning coefficients; $\mathbf{P}_{m}$ and $\mathbf{P}_{g}$ represent the local best and global best particle's position.

In this paper, the fitness function refers to the minimum transmission time, each particle's position is bounded by the concerned area, we need to check if the new position satisfies this constraint. Algorithm~\ref{algo_pso} illustrates the details of the PSO-based optimization algorithm that jointly solves the transmit power and hovering location of UAV.

\begin{algorithm} \label{algo_pso}
	\SetAlgoLined
	\SetKw{return}{return}
	  \caption{PSO-based location optimization algorithm}
	  \KwData{Location of GUs and Willie; Target rate R; Channel use n; Noise variance of GUs and Willie $\sigma_g$ and $\sigma_w$; Altitude of the UAV $h$; CI size $M$; Channel fading loss exponent $\alpha$; Covertness constraint $\epsilon$; Number of particles $m_{max}$; Max iterations $k_{max}$;}
	  \KwResult{Optimal transmit power $P_{a}^{*}$, optimal prior transmission probability $\rho_{1}^{*}$; horizontal location $\mathbf{q}_{a}^{*}$; Minimum transmission time $T^{*}$;}

  Initialize random positions and velocities for $m_{max}$ particles\;

  \For{$k= 1; k < k_{max}; k++$}{

    \For{$m= 1; m < m_{max}; m++$}{
    Update $m$-th particle's position by (\ref{pso_o})\;
      Check and correct the particle's position in the concerned area\;
      Calculate the optimal transmit power $P_{a}$, prior transmission probability $\rho_{1}$ and minimum transmission time $T_{m}(k)$ according to Theorem~\ref{th_min_time} with the current particle's position $\mathbf{o}_{m}(k)$ \;

      \If{$T_{m}(k)$ is less than personal minimum transmission time $T_{m}^{*}$}{
      	 $T_{m}^{*}=T_{m}(k)$\;
        $\mathbf{P}_{m}=\mathbf{o}_{m}(k)$\;
        \If{$T_{m}^{*}$ is less than global minimum transmission time $T^{*}$}{
        $P_{a}^{*}=P_{a}$\;
        $\rho_{1}^{*}=\rho_{1}$\;
        $T^{*}=T_{m}^{*}$\;
        $\mathbf{q}_{a}^{*}=\mathbf{P}_{g}=\mathbf{P}_{m}$\;
        }
      }
    }
  }
\end{algorithm}

\section{Minimum Covert Transmission Time Under the TH Transmission Scheme}\label{sec_tran_stra_th}
In this section, the UAV adopts the TH (t) transmission scheme to minimize the completion transmission time. With a selected GU as the relay, we first present the optimal DEP at Willie. Leveraging this optimal DEP, we then formulate the minimization of covert transmission time as an optimization problem. We solve the optimization problem by jointly designing the optimal transmit power $P_{a}$ and prior transmission probability $\rho_{1}$. The selection of the optimal relay is non-convex and challenging to solve mathematically, we thus propose a exhaustive optimization algorithm to get the optimal relay.

\subsection{Detection Error Probability at Willie}

We first analyze the optimal DEP at Willie. Similar to Section \ref{dep_oh}, suppose Willie adopts the optimal statistical hypothesis test that minimizes $\xi$ the optimal DEP $\xi^{*}$ at Willie, therefore, the lower bound of the optimal can be determined as
\begin{equation}\label{min_error_rate_v}
\begin{aligned}
\xi^{*} \ge \min(\rho_{0},\rho_{1})-\max(\rho_{0},\rho_{1})\sqrt{\frac{n\mathcal{D}(\mathbb{P}_{1},\mathbb{P}_{0})}{2}}.
\end{aligned}
\end{equation}

The KL divergence $\mathcal{D}(\mathbb{P}_{1},\mathbb{P}_{0})$ from $\mathbb{P}_{1}$ to $\mathbb{P}_{0}$ can be calculated as
\begin{equation}\label{min_error_rate_v_up_f}
\begin{aligned}
\mathcal{D}(\mathbb{P}_{1},\mathbb{P}_{0})&=\frac{1}{2}\left [\gamma_{aw}^{(h)}-\ln \left ( 1+\gamma_{aw}^{(h)}\right ) \right],
\end{aligned}
\end{equation}
where $\gamma_{aw}^{(h)}=\frac{P_{a}H_{aw}}{P_{r}H_{rw}+\sigma^{2}_{\hat{w}}}$, $P_{a}$ and $P_{r}$ are the transmit power of the UAV and the relay, respectively.

Thus, the lower bound of the optimal DEP $\xi^{*}$ at Willie can be expressed as
\begin{equation}\label{th_min_convert_con_th}
\begin{aligned}
\xi^{*} \ge \min(\rho_{0},\rho_{1})-\max(\rho_{0},\rho_{1}) \sqrt{\frac{n}{4}\left(\gamma_{aw}^{(h)}-\ln \left ( 1+\gamma_{aw}^{(h)}\right )\right )}.
\end{aligned}
\end{equation}

\subsection{Transmission Time Minimization}
Similar to Section~\ref{oh_time}, with the selected relay, to achieve the minimum covert transmission time, we need to calculate the effective throughput $C_{g}$ of the concerned system, which depends on the decoding error probability. According to the SNR received at the GUs, the probability of decoding the block error at the GUs just relying on the direct links from the UAV to the GUs is higher than the MRC-based relay links. Thus, our focus is the MRC-based relay links. According to \cite{hu2015capacity}, the overall decoding correct probability is the intersection of the decoding correct probability at the relay and the decoding correct probability at the $g$-th GU with MRC. Therefore, the overall decoding error probability, which represents the likelihood of at least one decoding error occurring in the system, can be given by 
\begin{equation}\label{effect_throughput_th_res}
\begin{aligned}
\eta_{ag}=1-(1-\eta_{ar})(1-\eta_{arg}),
\end{aligned}
\end{equation}
where $\eta_{ar}$ and $\eta_{arg}$ are the decoding error probabilities at the selected relay and $g$-th GU with MRC, respectively. Besides, due to the UAV transmits one block with probability $\rho_{1}$, the relay transmits one block with probability $\rho_{0}$, hence, the overall probability that to transmit one block is $\min(\rho_{0}, \rho_{1})$. Therefore, the effective throughput $C_{g}$ can be expressed as
\begin{equation}\label{effect_throughput_th_final}
\begin{aligned}
C_{g}=\min(\rho_{0},\rho_{1})nR(1- \eta_{ar} )(1- \eta_{arg} ).
\end{aligned}
\end{equation}
Similarly, we employ the linear approximation function for the decoding error probability. $\eta_{ar}$ at the selected relay can be determined by
\begin{equation}\label{decoding_error_rate_th_r_math}
\begin{aligned}
\eta_{ar}\approx \begin{cases}
1,&   \gamma _{ar}< \vartheta-\frac{1}{2\varsigma} ,\\
-\varsigma  (\gamma _{ar}-\vartheta )+\frac{1}{2},&\vartheta-\frac{1 }{2\varsigma} \leq \gamma _{ar}\leq \vartheta+\frac{1 }{2\varsigma}, \\
 0,&  \gamma _{ar}>\vartheta+\frac{1}{2\varsigma },
\end{cases}
\end{aligned}
\end{equation}
and $\eta_{arg}$ can be expressed as
\begin{equation}\label{decoding_error_rate_th_g_math}
\begin{aligned}
\eta_{arg}\approx \begin{cases}
1,&   \gamma _{arg}< \vartheta-\frac{1}{2\varsigma} ,\\
-\varsigma  (\gamma _{arg}-\vartheta )+\frac{1}{2},&\vartheta-\frac{1 }{2\varsigma} \leq \gamma _{arg}\leq \vartheta+\frac{1 }{2\varsigma}, \\
 0,&  \gamma _{arg}>\vartheta+\frac{1}{2\varsigma }.
\end{cases}
\end{aligned}
\end{equation}
where $\gamma_{ar}$ and $\gamma_{arg}$ are defined in Section~\ref{transmission_schemes}, $\varsigma$ and $\vartheta$ are given in (\ref{decoding_error_rate}).
We focus on the cases where $0 < \eta_{ar} < 1$ and $0 < \eta_{arg} < 1$. By substituting this specific cases of (\ref{decoding_error_rate_th_r_math}) and (\ref{decoding_error_rate_th_g_math}) into (\ref{effect_throughput_th_final}), the effective throughput $C_{g}$ of $g$-th GU can be given by
\begin{equation}\label{throughput_overall_th}
\begin{aligned}
C_{g} = \frac{1}{4}\min(\rho_{0},\rho_{1}) n R (1+2\varsigma  (\gamma _{ar}-\vartheta ))(1+2\varsigma  (\gamma _{arg}-\vartheta )).
\end{aligned}
\end{equation}
The corresponding transmission time is then determined by
\begin{equation}\label{transmission_time_overall_th}
\begin{aligned}
T_{g} =\frac{4M}{\min(\rho_{0},\rho_{1}) n R (1+2\varsigma  (\gamma _{ar}-\vartheta ))(1+2\varsigma  (\gamma _{arg}-\vartheta ))}.
\end{aligned}
\end{equation}
The overall transmission time $T$ is same as (\ref{overall_transmission_time}). Then, based on (\ref{th_min_convert_con_th}) and (\ref{transmission_time_overall_th}), the problem of minimizing the overall transmission time while ensuring the covertness can be formulated as
\begin{subequations} \label{optimal_time_th}
\begin{align}
\mathop{\min}\limits_{P_{a}, \rho_{1}} & \quad  T\label{p_time_th},\\
 s.t. \quad  & \frac{\max(\rho_{0},\rho_{1})}{4\min(\rho_{0},\rho_{1})} \sqrt{n\left(\gamma_{aw}^{(h)}-\ln \left ( 1+\gamma_{aw}^{(h)}\right )\right )} \le \epsilon \label{p_covert_th}, \\
\quad  & P_{min}\le P_{a} \label{power_con_th}, \\
 \quad  & \rho_{min}<\rho_{1}<\rho_{max} \label{p_priori_th},
\end{align}
\end{subequations}
where $P_{min}=\max\left(\frac{(\vartheta-\frac{1}{2\varsigma})\sigma _{r}^{2}}{H_{ar}},\frac{(\vartheta-\frac{1}{2\varsigma})\sigma _{\hat{g}}^{2}-P_{r}H_{r\hat{g}}}{H_{a\hat{g}}}\right)$ to ensure the feasibility of the specific case in (\ref{decoding_error_rate_th_r_math}) and (\ref{decoding_error_rate_th_g_math}).

\subsection{Joint Optimal Transmit Power and Prior Probability with the Selected Relay}
Given the selected relay, the joint solution of the optimal transmit power and prior transmission probability, and the corresponding minimum transmission time can be summarized in the following theorem.
\begin{theorem}\label{th_min_time_th}
For the considered UAV multicast system, when the UAV transmits $M$ bits CI to the GUs positioned above of the selected relay $r$ and altitude $h$ over $n$ channel use in each time slot, along with the target rate $R$ and the covertness constraint $\epsilon$. The expected optimal transmit power $P_{a}^{*}$ and the prior transmission probability $\rho_{1}^{*}$, and the associated minimum transmission time $T^{*}$ for the optimization problem (\ref{optimal_time_th}) are determined as
\begin{equation}\label{optimal_transmission_power_th}
\begin{aligned}
P_{a}^{*}=\frac{4\epsilon \sigma^{2}_{w}}{ H_{aw}e^{\kappa _{rw} }\kappa _{rw}E_{1}(\kappa _{rw})}\sqrt{\frac{2}{n}},
\end{aligned}
\end{equation}
and
\begin{equation}\label{optimal_transmission_prior_th}
\begin{aligned}
\rho_{1}^{*} = 0.5, 
\end{aligned}
\end{equation}
where $E_{1}(\cdot)$ is the exponential integral, defined as $E_{1}(z)=\int_{z}^{\infty } e^{-t}t^{-1}dt$ \cite{gradshteyn2014table}, and $\kappa _{rw}=\frac{\sigma^{2}_{w}}{P_{r}l_{rw}^{\alpha _{N}}}$. The corresponding expected minimum transmission time is thus determined as
\begin{equation}\label{optimal_transmission_time_th}
\begin{aligned}
T^{*}=\frac{8M}{  n R (1+2\varsigma  (\gamma _{ar}^{*}-\vartheta ))(1+2\varsigma  (\gamma _{arg}^{*}-\vartheta ))},\\
\end{aligned}
\end{equation}
where $\gamma_{ar}^{*}=\frac{P_{a}^{*}H_{ar}}{\sigma^{2}_{r}}$ and $\bar{\gamma}_{ar\hat{g}}^{*}=\frac{P_{a}^{*}H_{a\hat{g}}+P_{r}l_{r\hat{g}}^{\alpha _{N}}}{\sigma^{2}_{\hat{g}}}$; $\hat{g}$ represent the farthest GU from the relay; $\varsigma$ and $\vartheta$ are given in (\ref{decoding_error_rate}).
\end{theorem}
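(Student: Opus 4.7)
The plan is to mirror the structure of Theorem~\ref{th_min_time} while absorbing the two new features of the TH scheme: the random relay-to-Willie channel gain $H_{rw}$ appearing in the denominator of $\gamma_{aw}^{(h)}$, and the $\min(\rho_0,\rho_1)$ prefactor in the effective throughput (\ref{throughput_overall_th}). First I would identify the farthest GU from the relay, call it $\hat g$, as the bottleneck link so that $T = T_{\hat g}$ and the problem reduces to maximizing $C_{\hat g}$ subject to (\ref{p_covert_th})--(\ref{p_priori_th}).

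Next, applying the low-SNR approximation $\ln(1+x)\ge x-x^2/2$ to the left-hand side of (\ref{p_covert_th}) exactly as in the step from (\ref{min_convert_con}) to (\ref{min_convert_con_app}), the covertness constraint becomes
\[
\gamma_{aw}^{(h)} \le 4\epsilon\sqrt{\frac{2}{n}}\,\frac{\min(\rho_0,\rho_1)}{\max(\rho_0,\rho_1)}.
\]
Because only statistical CSI of the G2G link is assumed, this must be read in the ergodic sense over $|h_{rw}|^2\sim\mathrm{Exp}(1)$. The required expectation, via the substitution $t=y+\kappa_{rw}$, evaluates to
\[
\mathbb{E}\!\left[\frac{1}{P_r H_{rw}+\sigma_w^2}\right] = \frac{e^{\kappa_{rw}}E_1(\kappa_{rw})}{P_r l_{rw}^{\alpha_N}} = \frac{\kappa_{rw}e^{\kappa_{rw}}E_1(\kappa_{rw})}{\sigma_w^2},
\]
with $\kappa_{rw}=\sigma_w^2/(P_r l_{rw}^{\alpha_N})$. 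Setting the inequality to equality (since larger $P_a$ always helps the throughput) yields the feasibility-maximizing power
\[
P_a = \frac{4\epsilon\sigma_w^2}{H_{aw}\,\kappa_{rw}e^{\kappa_{rw}}E_1(\kappa_{rw})}\sqrt{\frac{2}{n}}\cdot\frac{\min(\rho_0,\rho_1)}{\max(\rho_0,\rho_1)}.
\]

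The third step is to establish $\rho_1^{*}=1/2$ without a case split. Unlike (\ref{transmission_throughput_overall_app}), the TH throughput (\ref{throughput_overall_th}) carries the factor $\min(\rho_0,\rho_1)$ rather than $\rho_1$ alone, and this factor is maximized at $\rho_1=1/2$. Simultaneously, the admissible $P_a$ scales with $\min(\rho_0,\rho_1)/\max(\rho_0,\rho_1)$, which is also maximized at $\rho_1=1/2$ and pushes $\gamma_{ar}$ and $\gamma_{arg}$ to their largest admissible values. Because both the prior prefactor and the SNR-dependent brackets of (\ref{throughput_overall_th}) are non-decreasing under this joint movement, no trade-off arises and the joint optimum is $\rho_1^{*}=1/2$. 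Plugging $\rho_1^{*}$ into the expression for $P_a$ yields (\ref{optimal_transmission_power_th}), and substituting into (\ref{transmission_time_overall_th}) gives (\ref{optimal_transmission_time_th}).

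The main obstacle I anticipate is the rigorous interpretation of the covertness requirement when $H_{rw}$ is random and unknown to the UAV: one must argue that the ergodic replacement is the correct notion of (\ref{p_covert_th}) here, and then carry out the exponential-integral computation through the change of variables $t=y+\kappa_{rw}$. A secondary but necessary check is that the operating point with $\rho_1^{*}=1/2$ satisfies (\ref{power_con_th})--(\ref{p_priori_th}), so that the simple closed form is not overridden by a boundary constraint; verifying this uses the same $P_{\min}$ construction as in Theorem~\ref{th_min_time} but with the combined MRC feasibility threshold in (\ref{decoding_error_rate_th_r_math})--(\ref{decoding_error_rate_th_g_math}).
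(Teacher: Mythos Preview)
Your proposal is correct and tracks the paper's proof closely in overall structure: identify the bottleneck GU $\hat g$, linearize the covertness constraint via $\ln(1+x)\ge x-x^{2}/2$, take the expectation over the exponential fading $|h_{rw}|^{2}$ to obtain the $\kappa_{rw}e^{\kappa_{rw}}E_{1}(\kappa_{rw})$ factor, and tighten the constraint to equality for $P_a^{*}$.

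The one substantive difference is your third step. The paper establishes $\rho_1^{*}=1/2$ by the same two-case split as in Theorem~\ref{th_min_time}: it substitutes $P_a^{*}(\rho_1)$ into the expected throughput $\bar C_{\hat g}$ and checks the sign of $\partial \bar C_{\hat g}/\partial\rho_1$ separately for $\rho_1\le\rho_0$ (positive, so push up to $1/2$) and $\rho_1\ge\rho_0$ (negative, so push down to $1/2$). Your direct argument---that the TH throughput (\ref{throughput_overall_th}) factors as $\min(\rho_0,\rho_1)$ times two SNR brackets driven by $P_a\propto\min(\rho_0,\rho_1)/\max(\rho_0,\rho_1)$, and all three factors are simultaneously maximized at $\rho_1=1/2$---is cleaner and avoids the derivative computation altogether. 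It exploits precisely the structural simplification relative to the OH case (the $\rho_1$ prefactor in (\ref{transmission_throughput_overall_app}) becomes $\min(\rho_0,\rho_1)$ here), so no trade-off can arise. Both routes are valid; yours is shorter. One small omission: the paper also averages the throughput itself over $|h_{r\hat g}|^{2}$ (equation (\ref{expected_throughput_th})) before optimizing, which is what produces the $\bar\gamma_{ar\hat g}^{*}$ appearing in the theorem statement; you should make this explicit, though it does not affect your monotonicity argument.
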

\begin{proof}
According to (\ref{channel_gain_ag}) and (\ref{channel_gain_rg}), given the selected relay, we can conclude that the worst communication link from the UAV-relay-GUs is the one farthest from the relay to GU, using $\hat{g}$ denotes the GU farthest from the selected relay. Then, based on this worst communication link, we jointly explore the optimal transmit power and prior transmission probability for the UAV to minimize the transmission time. From (\ref{transmission_time_overall_th}), it is easy to find that $T_{\hat{g}}$ is a decreasing function of $C_{\hat{g}}$. The optimal problem can be transformed into maximizing $C_{\hat{g}}$. Notice that both the UAV and relay only know the channel state information from GUs, we first calculate the expected throughput $\bar{C}_{\hat{g}}$, which can be given by 
\begin{align}\label{expected_throughput_th}
\bar{C}_{\hat{g}}&= \int_{0}^{\infty } \!B \left(1+2\varsigma  \left(\frac{P_{a}H_{ag}+P_{r}l_{rg}^{\alpha_{N}}\vert h_{rg}\vert ^{2}}{\sigma^{2}_{g}}\!-\!\vartheta \right)\right) f_{\vert h_{rg}\vert ^{2}}(x)dx \notag \\
&=B \left(1+2\varsigma  \left(\frac{P_{a}H_{ag}+P_{r}l_{rg}^{\alpha_{N}}}{\sigma^{2}_{g}}\!-\!\vartheta \right)\right) ,
\end{align}
where $B=\frac{1}{4}\min(\rho_{0},\rho_{1}) n R (1+2\varsigma  (\gamma _{ar}-\vartheta ))$. We can see $\bar{C}_{\hat{g}}$ is an increasing function with respect to $P_{a}$. To achieve the maximum $\bar{C}_{\hat{g}}$, we need to obtain the maximum transmit power while satisfying the covertness constraint. Therefore, similar to the approaches of (\ref{min_convert_con_app}), the covertness requirement can be rewritten as
\begin{equation}\label{min_convert_con_th}
\begin{aligned}
\mathbb{E}\left[\frac{P_{a}H_{aw}}{P_{r}H_{rw}+\sigma^{2}_{w}}\right] \leq 4\epsilon\sqrt{\frac{2}{n}}\frac{\min(\rho_{0},\rho_{1})}{\max(\rho_{0},\rho_{1})}.
\end{aligned}
\end{equation}
The expected value of KL divergence can be given by
\begin{equation}\label{min_convert_con_app_th}
\begin{aligned}
\mathbb{E}\left[\frac{P_{a}H_{aw}}{P_{r}H_{rw}+\sigma^{2}_{w}}\right]
&= \int_{0}^{\infty } \frac{P_{a}H_{aw}}{P_{r}l_{rw}^{\alpha_{N}}\vert h_{rw}\vert ^{2}+\sigma^{2}_{w}}f_{\vert h_{rw}\vert ^{2}}(x)dx\\
&=\frac{P_{a}H_{aw}e^{\kappa _{rw} }\kappa _{rw}}{\sigma^{2}_{w} }E_{1}(\kappa _{rw}),
\end{aligned}
\end{equation}
By setting the inequality of (\ref{min_convert_con_th}) as equality, the optimal transmit power $P_{a}^{*}$ can be given by
\begin{equation}\label{op_power_maxmin_th}
\begin{aligned}
P_{a}^{*}=\frac{4\epsilon \sigma^{2}_{w}}{ H_{aw}e^{\kappa _{rw} }\kappa _{rw}E_{1}(\kappa _{rw})}\sqrt{\frac{2}{n}}\frac{\min(\rho_{0},\rho_{1})}{\max(\rho_{0},\rho_{1})}.
\end{aligned}
\end{equation}
Then, similar to the proof of Theorem~\ref{th_min_time}, we address the optimization problem by considering two cases: $\rho_{1} \le \rho_{0}$ and $\rho_{1}> \rho_{0}$. 

\subsubsection{ $\rho_{1} \le \rho_{0}$} With $\rho_{1}\le \rho_{0}$, the optimal transmit power can be given by
\begin{equation}\label{op_power_case1_th}
\begin{aligned}
P_{a}^{*}=\frac{4\epsilon\sigma^{2}_{w}\rho_{1}}{H_{aw}e^{\kappa _{rw} }\kappa _{rw}E_{1}(\kappa _{rw})(1-\rho_{1})}\sqrt{\frac{2}{n}}.
\end{aligned}
\end{equation}
Substituting $P_{a}^{*}$ into (\ref{expected_throughput_th}), according to the  monotonicity of the product of increasing functions, $\frac{\partial \bar{C}_{\hat{g}}}{\partial \rho_{1}} >0$. We can conclude that, for $\rho_{1}>0$, $\bar{C}_{\hat{g}}$ monotonically increases with $\rho_{1}$, the optimal $\rho_{1}^{*}$ is the maximum value of its possible range. Hence, the optimal $\rho_{1}^{*}=0.5$. 

\subsubsection{ $\rho_{1} \ge \rho_{0}$} With $\rho_{1} \ge \rho_{0}$, the optimal transmit power can be given by
\begin{equation}\label{op_power_case1_th}
\begin{aligned}
P_{a}^{*}=\frac{4\epsilon\sigma^{2}_{w}(1-\rho_{1})}{H_{aw}e^{\kappa _{rw} }\kappa _{rw}E_{1}(\kappa _{rw})\rho_{1}}\sqrt{\frac{2}{n}}.
\end{aligned}
\end{equation}
Substituting $P_{a}^{*}$ into (\ref{expected_throughput_th}), according to the  monotonicity of the product of decreasing functions, $\frac{\partial \bar{C}_{\hat{g}}}{\partial \rho_{1}} <0$. We can conclude that, for $\rho_{1}>0$, $\bar{C}_{\hat{g}}$ monotonically decreases with $\rho_{1}$, the optimal $\rho_{1}^{*}$ is the minimum value of its possible range. Hence, the optimal $\rho_{1}^{*}=0.5$. 

Overall, the jointly optimal transmit power and prior transmission probability can be given by (\ref{optimal_transmission_power_th}) and (\ref{optimal_transmission_prior_th}), the corresponding minimum transmission time is determined as (\ref{optimal_transmission_time_th}). 
\end{proof}

\subsection{Relay Selection Optimization}
In this subsection, we intend to jointly explore the optimal transmit power and prior transmission probability of the UAV and optimal relay to achieve the overall minimum transmission time. The optimal problem is challenging to solve mathematically since the variability in the worst communication links is associated with different relay. We thus adopt the exhaustive algorithm to find the optimal relay, as shown in Algorithm~\ref{algo_relay}.

\begin{algorithm} \label{algo_relay}
	\SetAlgoLined
	\SetKw{return}{return}
	  \caption{Optimal relay selection algorithm}
	  \KwData{Location of GUs and Willie; Target rate R; Transmit power of the relay $P_{r}$; Channel use n; Noise variance of GUs and Willie $\sigma_g$ and $\sigma_w$; Altitude of the UAV $h$; CI size $M$; Channel fading loss exponent $\alpha^{L}$ and $\alpha^{N}$; Covertness constraint $\epsilon$;}
	  \KwResult{Optimal transmit power $P_{a}^{*}$; Optimal prior transmission probability $\rho_{1}^{*}$; Optimal relay $r^{*}$; Minimum transmission time $T^{*}$;}

  \For{$r= 1; r < G; r++$}{ 
    Find the GU $\hat{g}$ with the worst communication link\;
      Calculate the optimal transmit power $P_{a}(r)$, optimal prior transmission probability $\rho_{1}(r)$ and minimum transmission time $T(r)$ according to Theorem~\ref{th_min_time_th} at GU $\hat{g}$ with r as relay\; 
   \If{$T(r)$ is less than $T$}{
      	 $T^{*}=T(r)$\;
        $P_{a}^{*}=P_{a}(r)$\;
        $\rho_{1}^{*}=\rho_{1}(r)$\;
        $r^{*}=r(r)$\;
      }
  }
\end{algorithm}

\section{Numerical Results} \label{sec_num}
In this section, we present extensive numerical results to exhibit the performance of the proposed two covert transmission protocols.

\subsection{Simulation Settings}
In the simulations, we define a circular area with a radius of $500m$, and let the density parameter $\lambda$ varies in the range of $[2 \times 10^{-5} \pi^{-1}, 8 \times 10^{-4} \pi^{-1}]$. Aligning with the approach in \cite{shu2019delay}, we set the time slot duration as $\Delta t=1$ms, and each channel use lasts for 0.01ms. Consequently, the total number of channel uses in a time slot is $n=\frac{\Delta t}{0.01ms}=100$. Additional simulation parameters are as follows: UAV's altitude $h=500m$, file size $M=1$Mb, channel fading loss exponent $\alpha_{L}=-2$ and $\alpha_{N}=-3$, target rate $R=0.1$bpcu (bit per channel use), covertness constraint $\epsilon=0.1$, the S-curve parameters $e=4.88$ and $f=0.429$. We further set the noise variance for Willie $\sigma_{w}^{2}$ and noise variance $\sigma_{g}^{2}$ for each GU as $\sigma_{g}^{2}=\sigma_{w}^{2}=-70$dBm. For performance comparison, we utilize the center of the minimum circle covering all GUs with the fixed altitude $h=500m$ as the reference optimal location.

\subsection{Performance Evaluation}
This subsection provides numerical results to illustrate the impact of system parameters on the transmission time performance. For a given setting of system parameters (like the number of GUs, GU density, radius of considered area, and covertness constraint), we generate $10^{4}$ sets of GU spatial distributions, and then execute the Algorithm~\ref{algo_pso} and Algorithm~\ref{algo_relay} to get average minimum transmission time of UAV under the two transmission protocols.
\begin{figure} [thb]
    \centering
	\includegraphics[width=0.45\textwidth]{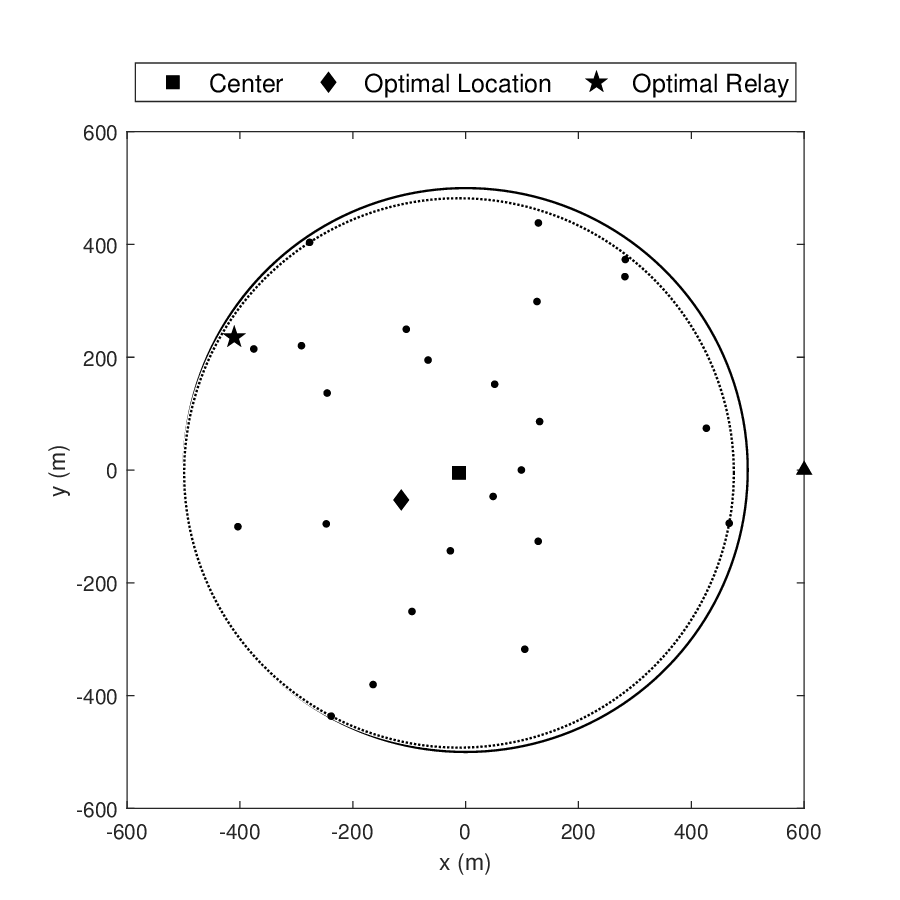}
    \caption{Illustration of the optimal location and relay under the settings of $\lambda=2*10^{-5}$, $r=500m$ and $\mathbf{q}_{w}=[600, 0]$.}
    \label{fig_sample}
\end{figure}

We provide an illustrative example of two transmission protocols in Fig. ~\ref{fig_sample}. The black dot represents GUs, the black triangle represents Willie, the dotted circle is the minimal circle covered all GUs, and the realization circle is the considered area. Surprisingly, the optimal UAV location of the OH protocol in the example does not align with and is even a little far away from the center of the minimum circle encompassing all GUs; the optimal relay of the TH protocol is even farther.

\begin{figure} [tb]
    \centering
	\includegraphics[width=0.45\textwidth]{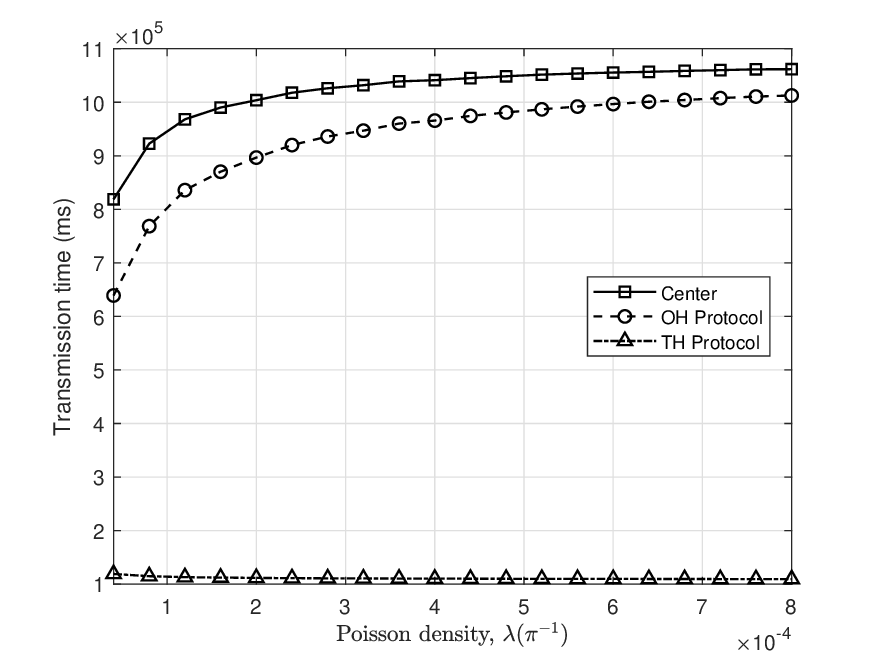}
    \caption{Transmission time versus $\lambda$ under the setting of $\mathbf{q}_{w}=[600, 0]$.}
    \label{fig_lambda_time}
\end{figure}

The results in Fig.~\ref{fig_lambda_time} highlight that both the OH and TH transmission protocols achieve shorter transmission time than simply taking the center of the minimum circle with the fixed altitude $h=500m$ as the location of UAV, and the TH transmission protocol significantly achieves shorter transmission time.  Underscoring efficiency of the new transmission protocols. Notably, a smaller value of $\lambda$ leads to a shorter transmission time. This observation suggests that the proposed transmission protocols are efficient in reducing the transmission time, particularly when the number of GUs is relatively small.

\begin{figure} [tb]
    \centering
	\includegraphics[width=0.45\textwidth]{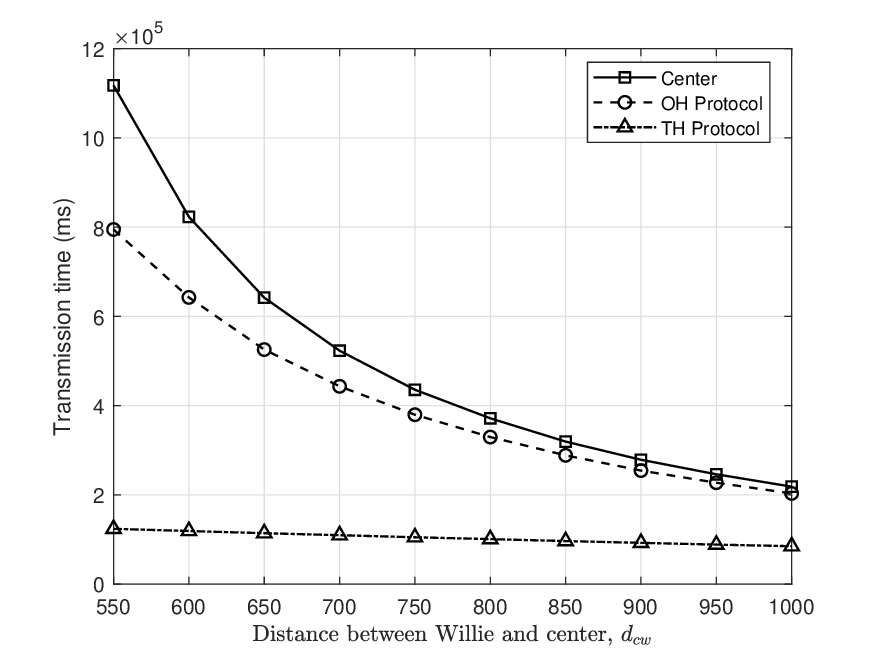}
    \caption{Transmission time versus $d_{cw}$ under the setting of $\lambda=4*10^{-5}*\pi^{-1}$.}
    \label{fig_distance_time}
\end{figure}

We then plot in Fig.~\ref{fig_distance_time} the relationship between transmission time and the distance $d_{cw}$ from Willie to the center of the considered circle with $\lambda=4*10^{-5}*\pi^{-1}$.  As can be seen from Fig.~\ref{fig_distance_time}, transmission time consistently diminishes as $d_{cw}$ increases. This behavior is attributed to the fact that, with Willie is farther away from GUs, the UAV can employ higher transmit power while satisfying the covertness constraint, resulting in a reduction in transmission time.

\begin{figure} [tb]
    \centering
	\includegraphics[width=0.45\textwidth]{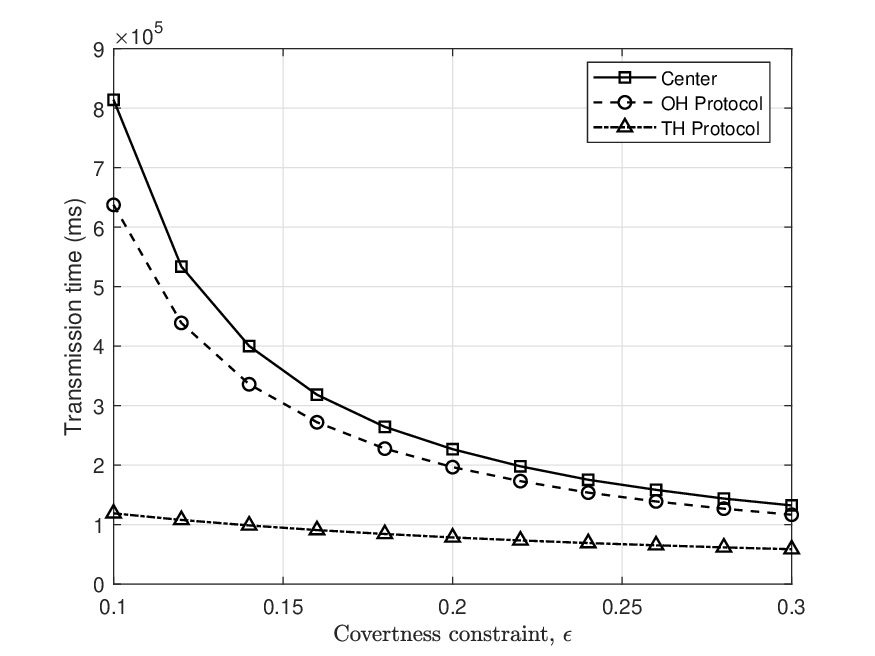}
    \caption{Transmission time versus covertness constraint $\epsilon$ under the settings of $\lambda=4*10^{-5}*\pi^{-1}$ and $\mathbf{q}_{w}=[600, 0]$.}
    \label{fig_covertness_time}
\end{figure}

Finally, we present in Fig.~\ref{fig_covertness_time} how transmission time varies with the covertness constraint $\epsilon$. As illustrated in Fig.~\ref{fig_covertness_time}, transmission time decreases with the increase of $\epsilon$ (i.e, a relaxation of the covertness constraint). This is attributed to the UAV’s ability to transmit CI at a higher power when $\epsilon$ is larger, consequently reducing the overall transmission time.

Upon a careful examination of Fig.~\ref{fig_covertness_time}, it is evident that the gap between the transmission time of the proposed transmission protocols and that of the utilizing the center of the minimum circle as UAV location widens as $\epsilon$ decreases, indicating the transmission protocols becomes more significant as the covertness constraint tends to more stringent. Both Fig.~\ref{fig_distance_time} and Fig.~\ref{fig_covertness_time} underscore the efficiency of the transmission protocols in effectively leveraging multicast transmission potential to reduce the transmission time in the UAV system, especially under more stringent covertness constraints.

\section{Conclusion} \label{sec_con}
This paper introduced two transmission protocols, where in the OH transmission protocol, the PSO-based UAV location and transmit power optimization algorithm is proposed, and in the TH transmission, the optimal relay and transmit power optimization exhaustive algorithm is proposed to facilitate the covert multicast in UAV-enabled communication systems. In contrast to the conventional center location-based transmission, our comprehensive numerical results illustrate that the new transmission protocols can remarkably enhance the time efficiency for covert multicast, particularly in systems with stringent covertness constraints. It is anticipated that this work can serve as a promising solution for future UAV-enabled covert multicast communication systems.

\bibliography{bibfile}
\bibliographystyle{IEEEtran}

\end{document}